\newtheorem{theorem}{Theorem}[section]
\newtheorem{corollary}[theorem]{Corollary}
\newtheorem{lemma}[theorem]{Lemma}
\newcommand{\BALD}{\begin{aligned}}
\newcommand{\EALD}{\end{aligned}}
\newcommand{\BALDS}{\begin{aligned*}}
\newcommand{\EALDS}{\end{aligned*}}
\newcommand{\BCAS}{\begin{cases}}
\newcommand{\ECAS}{\end{cases}}
\newcommand{\BEAS}{\begin{eqnarray*}}
\newcommand{\EEAS}{\end{eqnarray*}}
\newcommand{\BEQ}{\begin{equation}}
\newcommand{\EEQ}{\end{equation}}
\newcommand{\BIT}{\begin{itemize}}
\newcommand{\EIT}{\end{itemize}}
\newcommand{\BMAT}{\begin{bmatrix}}
\newcommand{\EMAT}{\end{bmatrix}}
\newcommand{\BNUM}{\begin{enumerate}}
\newcommand{\ENUM}{\end{enumerate}}
\newcommand{\ie}{{\it i.e.}}
\newcommand{\BA}{\begin{array}}
\newcommand{\EA}{\end{array}}
\newcommand{\reals}{\mathbf{R}}
\newcommand{\Prob}{\mathop{\mathbf{Pr}}}
\newcommand{\norm}[1]{\left\| #1 \right\|}
\newcommand{\cN}{\mathcal{N}}
\DeclareMathOperator{\unif}{Unif}
\newcommand{\R}{\mathbb{R}}
\newcommand{\V}{{\cal V}}
\newcommand{\Pp}{\mathbb{P}}
\newcommand{\E}{E}
\newcommand{\hbeta}{\hat{\beta}}
\newcommand{\indicator}[1]{\mathbbm{1}\left(#1\right)}
\begin{document}

\title{Exact Post Model Selection Inference for Marginal Screening}
\author{Jason D. Lee and Jonathan E. Taylor}
\maketitle

\begin{abstract} 
We develop a framework for post model selection inference, via marginal screening, in linear regression. At the core of this framework is a result that characterizes the exact distribution of linear functions of the response $y$, conditional on the model being selected (``condition on selection" framework).
This allows us to construct valid confidence intervals and hypothesis tests for regression coefficients that account for the selection procedure. In contrast to recent work in high-dimensional statistics, our results are exact (non-asymptotic) and require no eigenvalue-like assumptions on the design matrix $X$. Furthermore, the computational cost of marginal regression, constructing confidence intervals and hypothesis testing is negligible compared to the cost of linear regression, thus making our methods particularly suitable for extremely large datasets. Although we focus on marginal screening to illustrate the applicability of the condition on selection framework, this framework is much more broadly applicable. We show how to apply the proposed framework to several other selection procedures including orthogonal matching pursuit, non-negative least squares, and marginal screening+Lasso. 
\end{abstract} 

\section{Introduction}
Consider the model 
\begin{align}
y_i = \mu(x_i) + \epsilon_i,\;\epsilon_i \sim\cN(0,\sigma^2 I),
\label{eq:model}
\end{align}
where $\mu(x)$ is an arbitrary function, and $x_i \in \reals^p$. Our goal is to perform inference on $(X^T X)^{-1} X^T \mu$, which is the best linear predictor of $\mu$. In the classical setting of $n>p$ , the least squares estimator 
\begin{align}
\hat \beta = (X^T X)^{-1} X^T y 
\end{align}
is a commonly used estimator for $ (X^T X)^{-1} X^T \mu$. Under the linear model assumption $\mu = X\beta^0$, the exact distribution of $\hat \beta$ is
\begin{align}
\hat \beta \sim \cN(\beta^0, \sigma^2 (X^TX)^{-1}).
\label{eq:ls-dist}
\end{align}
Using the normal distribution, we can test the hypothesis $H_0: \beta^0 _j =0$ and form confidence intervals for $\beta^0_j$ using the z-test.

However in the high-dimensional $p>n$ setting, the least squares estimator is an underdetermined problem, and the predominant approach is to perform variable selection or model selection \cite{buhlmann2011statistics}. There are many approaches to variable selection including AIC/BIC, greedy algorithms such as forward stepwise regression, orthogonal matching pursuit, and regularization methods such as the Lasso. The focus of this paper will be on the model selection procedure known as marginal screening, which selects the $k$ most correlated features $x_j$ with the response $y$.

Marginal screening is the simplest and most commonly used of the variable selection procedures \cite{guyon2003introduction,tusher2001significance,leekasso}. Marginal screening requires only $O(np)$ computation and is several orders of magnitude faster than regularization methods such as the Lasso; it is extremely suitable for extremely large datasets where the Lasso may be computationally intractable to apply. Furthermore, the selection properties are comparable to the Lasso  \cite{genovese2012comparison}. In the ultrahigh dimensional setting $p=O(e^{n^k})$, marginal screening is shown to have the SURE screening property, $P(S \subset \hat S$), that is marginal screening selects a superset of the truly relevant variables \cite{fan2008sure,fan2010sure,fan2009ultrahigh}. Marginal screening can also be combined with a second variable selection procedure such as the Lasso to further reduce the dimensionality; our statistical inference methods extend to the Marginal Screening+Lasso method.

Since marginal screening utilizes the response variable $y$, the confidence intervals and statistical tests based on the distribution in \eqref{eq:ls-dist} are not valid; confidence intervals with nominal $1-\alpha$ coverage may no longer cover at the advertised level:
$$
\Pr\left( \beta^0_j \in C_{1-\alpha}(x) \right)< 1-\alpha.
$$
Several authors have previously noted this problem including recent work in \cite{leeb2003finite,leeb2005model,leeb2006can,berk2013posi}. A major line of work \cite{leeb2003finite,leeb2005model,leeb2006can} has described the difficulty of inference post model selection: the distribution of post model selection estimates is complicated and cannot be approximated in a uniform sense by their asymptotic counterparts. 

In this paper, we describe how to form exact confidence intervals for linear regression coefficients {\it post model selection}. We assume the model \eqref{eq:model}, and operate under the fixed design matrix $X$ setting. The linear regression coefficients constrained to a subset of variables $S$ is linear in $\mu$, $e_j^T(X_S ^T X_S)^{-1} X_S ^T \mu=\eta^T \mu$ for some $\eta$. We derive the conditional distribution of $\eta^T y$ for any vector $\eta$, so we are able to form confidence intervals and test regression coefficients.

In Section \ref{sec:related-work} we discuss related work on high-dimensional statistical inference, and Section \ref{sec:marg-screen} introduces the marginal screening algorithm and shows how z intervals may fail to have the correct coverage properties. Section \ref{sec:selection-event} and \ref{sec:truncated-gaussian-test} show how to represent the marginal screening selection event as constraints on $y$, and construct pivotal quantities for the truncated Gaussian. Section \ref{sec:inference-marg-screen} uses these tools to develop valid hypothesis tests and confidence intervals. 

Although the focus of this paper is on marginal screening, the ``condition on selection" framework, first proposed for the Lasso in \cite{lee2013exact}, is much more general; we use marginal screening as a simple and clean illustration of the applicability of this framework.  In Section \ref{sec:extensions}, we discuss several extensions including how to apply the framework to other variable/model selection procedures and to nonlinear regression problems. Section \ref{sec:extensions} covers
\begin{enumerate}
\item marginal screening+Lasso, a screen and clean procedure that first uses marginal screening and cleans with the Lasso,
\item orthogonal matching pursuit (OMP)
\item non-negative least squares (NNLS).
\end{enumerate}

\section{Related Work}
\label{sec:related-work}
Most of the theoretical work on  high-dimensional linear models focuses on  \emph{consistency}. Such results establish, under restrictive assumptions on $X$, the Lasso $\hat{\beta}$ is close to the unknown $\beta^0$  \cite{negahban2012unified} and selects the correct model \cite{zhao2006model,wainwright2009sharp,lee2013model}. We refer to the reader to \cite{buhlmann2011statistics} for a comprehensive discussion about the theoretical properties of the Lasso.

There is also recent work on obtaining confidence intervals and significance testing for penalized M-estimators such as the Lasso. One class of methods uses sample splitting or subsampling to obtain confidence intervals and p-values \cite{wasserman2009high,meinshausen2009p}. In the post model selection literature, the recent work of  \cite{berk2013posi} proposed the POSI approach, a correction to the usual t-test confidence intervals by controlling the familywise error rate for all parameters in any possible submodel. The POSI approach will produce valid confidence intervals for any possible model selection procedure; however for a given model selection procedure such as marginal regression, it will be conservative. In addition, the POSI methodology is extremely computationally intensive and currently only applicable for $p\le 30$.

A separate line of work establishes the asymptotic normality of a corrected estimator obtained by ``inverting'' the KKT conditions \cite{van2013asymptotically,zhang2011confidence,javanmard2013confidence}. The corrected estimator $\hat{b}$ has the form
$
\hat{b} = \hat{\beta} + \lambda\hat{\Theta}\hat{z},
$
where $\hat{z}$ is a subgradient of the penalty at $\hat{\beta}$ and $\hat{\Theta}$ is an approximate inverse to the Gram matrix $X^TX$. The two main drawbacks to this approach are 1) the confidence intervals are valid only when the M-estimator is consistent, and thus require restricted eigenvalue conditions on $X$, 2) obtaining $\hat{\Theta}$ is usually much more expensive than obtaining $\hat{\beta}$, and 3) the method is specific to regularized estimators, and does not extend to marginal screening, forward stepwise, and other variable selection methods.


Most closely related to our work is the ``condition on selection" framework laid out in \cite{lee2013exact} for the Lasso. Our work extends this methodology to other variable selection methods such as marginal screening,  marginal screening followed by the Lasso (marginal screening+Lasso), orthogonal matching pursuit, and non-negative least squares. The primary contribution of this work is the observation that many model selection methods, including marginal screening and Lasso, lead to ``selection events" that can be represented as a set of constraints on the response variable $y$. By conditioning on the selection event, we can characterize the exact distribution of $\eta^T y$. This paper focuses on marginal screening, since it is the simplest of variable selection methods, and thus the applicability of the ``conditioning on selection event" framework is most transparent. However, this framework is not limited to marginal screening and can be applied to a wide a class of model selection procedures including greedy algorithms such as matching pursuit and orthogonal matching pursuit.  We discuss some of these possible extensions in Section \ref{sec:extensions}, but leave a thorough investigation to future work.

A remarkable aspect of our work is that we only assume $X$ is in general position, and the test is exact, meaning the distributional results are true even under finite samples. By extension, we do not make any assumptions on $n$ and $p$, which is unusual in high-dimensional statistics \cite{buhlmann2011statistics}. Furthermore, the computational requirements of our test are negligible compared to computing the linear regression coefficients. 

Our test assumes that the noise variance $\sigma^2$ is known. However, there are many methods for estimating $\sigma^2$ in high dimensions. A data splitting technique is used in \cite{fan2012variance}, while \cite{sun2012scaled} proposes a method that computes the regression estimate and an estimate of the variance simultaneously. We refer the reader to \cite{reid2013variance} for a survey and comparison of the various methods, and assume $\sigma^2$ is known for the remainder of the paper.

\section{Marginal Screening}
\label{sec:marg-screen}
Let $X\in \R^{n \times p }$ be the design matrix, $y \in \R^n$ the response variable, and assume the model
$$
y_i=\mu (x_i) +\epsilon_i, \epsilon_i \sim \cN(0,\sigma^2 I).
$$ We will assume that $X$ is in general position and has unit norm columns. The algorithm estimates $\hbeta$ via Algorithm \ref{alg:marg-screen}.
\begin{algorithm}
	\caption{Marginal screening algorithm}
\begin{algorithmic}[1]
\State \textbf{Input:} Design matrix $X$, response $y$, and model size $k$.
\State Compute $|X^T y|$.
\State Let $\hat S$ be the index of the $k$ largest entries of $|X^Ty|$.
\State Compute $\hbeta_{\hat S} = (X_{\hat S}^TX_{\hat S})^{-1} X_{\hat S}^T y$
\end{algorithmic}
\label{alg:marg-screen}
\end{algorithm}
The marginal screening algorithm chooses the $k$ variables with highest absolute dot product with $y$, and then fits a linear model over those $k$ variables. We will assume $k\le \min(n,p)$. For any fixed subset of variables $S$, the distribution of $\hbeta_S = (X_S ^T X_S)^{-1} X_S ^T y$ is 
\begin{align}
\hbeta_S &\sim \cN(\beta^\star _{S} , \sigma^2 (X_S ^TX_S)^{-1} )\\
\beta^\star _{S} &:= (X_S^T X_S)^{-1} X_S ^T \mu.
\label{eq:regression-distribution}
\end{align}
We will use the notation $\beta^\star_{j \in S} :=\left(\beta^\star _{S} \right)_j$, where $j$ is indexing a variable in the set $S$.
The z-test intervals for a regression coefficient are 
\begin{align}
&C(\alpha,j,S):=\nonumber\\
&\left( \hbeta_{j \in S} - \sigma z_{1-\alpha/2}  (X_S ^T X_S)_{jj} , \hbeta_{j \in S} + \sigma  z_{1-\alpha/2}  (X_S ^T X_S)_{jj} \right)
\label{eq:z-test-interval}
\end{align}
and each interval has $1-\alpha$ coverage, meaning $\Pr \left(\beta^\star _{j\in S} \in C(\alpha,j,S) \right) =1-\alpha$.
However if $\hat S$ is chosen using a model selection procedure that depends on $y$, the distributional result \eqref{eq:regression-distribution} no longer holds and  the z-test intervals will not cover at the $1-\alpha$ level. It is possible that 
$$
\Pr \left(\beta^\star _{j\in \hat S} \in C(\alpha,j, \hat S) \right) <1-\alpha.
$$
Similarly, the test of the hypothesis $H_0: \beta^\star_{j \in \hat S} =0$ will not control type I error at level $\alpha$, meaning $\Pr \left( \text{reject $H_0$} |H_0 \right) >\alpha$.
\subsection{Failure of z-test confidence intervals}
We will illustrate empirically that the z-test intervals do not cover at $1-\alpha$ when $\hat S$ is chosen by marginal screening in Algorithm \ref{alg:marg-screen}. 
\begin{figure}[h]
\centering
	\includegraphics[width=.5\textwidth]{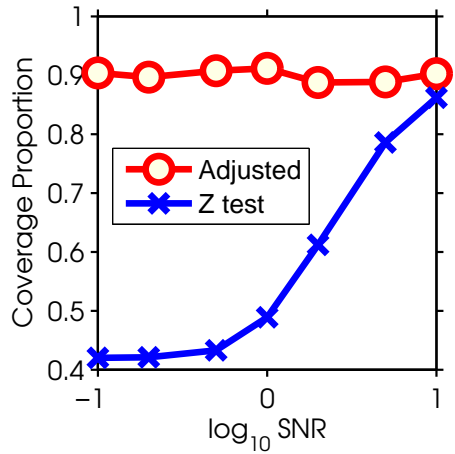}
   \caption{Plots of the coverage proportion across a range of SNR (log-scale). We see that the coverage proportion of the z intervals can be far below the nominal level of $1-\alpha=.9$, even at SNR =5. The adjusted intervals always have coverage proportion $.9$.}
   \label{fig:fail-z-interval}
\end{figure}
For this experiment we generated $X$ from a standard normal with $n=20$ and $p=200$. The signal vector is $2$ sparse with $\beta^0_1, \beta^0_2 = \text{SNR}$,  $y=X\beta^0 +\epsilon$, and $\epsilon \sim N(0,1)$. The confidence intervals were constructed for the $k=2$ variables selected by the marginal screening algorithm. The z-test intervals were constructed via \eqref{eq:z-test-interval} with $\alpha=.1$, and the adjusted intervals were constructed using Algorithm \ref{alg:ci}. The results are described in Figure \ref{fig:fail-z-interval}. The y-axis plots the coverage proportion or the fraction of times the true parameter value fell in the confidence interval. Each point represents $500$ independent trials. The x-axis varies the SNR parameter over the values $0.1, .2, .5, 1, 2, 5, 10$. From the figure, we see that the z intervals can have coverage proportion drastically less than the nominal level of $1-\alpha=.9$, and only for SNR=$10$ does the coverage tend to $.9$. This motivates the need for intervals that have the correct coverage proportion after model selection. 
\section{Representing the selection event}
\label{sec:selection-event}
Since Equation \eqref{eq:regression-distribution} does not hold for a selected $\hat S$ when the selection procedure depends on $y$, the z-test intervals are not valid. Our strategy will be to understand the conditional distribution of $y$ and contrasts (linear functions of $y$) $\eta^T y$, then construct inference conditional on the selection event $\hat E$. We will use $\hat E(y)$ to represent a random variable, and $E$ to represent an element of the range of $\hat E(y)$. In the case of marginal screening, the selection event $\hat E(y)$ corresponds to the set of selected variables $\hat S$ and signs $s$: 
\begin{align}
&\hat E(y) =\left\{y: \text{sign}(x_i ^T y) x_i ^T y > \pm x_j^T y \text{ for all $i \in \hat S$ and $j \in \hat S^c$} \right \} \nonumber\\
&=\left\{ y:\hat s_i x_i^T y > \pm x_j ^T y \text{ and } \hat s_i x_i^T y \ge 0 \text{ for all $i \in \hat S$ and $j \in \hat S^c$}\right\} \nonumber\\
&=\left\{y: A(\hat S,\hat s)y \le b(\hat S,\hat s) \right \}
\label{eq:A-b-defn}
\end{align}
for some matrix $A(\hat S,\hat s)$ and vector $b(\hat S,\hat s)$\footnote{$b$ can be taken to be $0$ for marginal screening, but this extra generality is needed for other model selection methods}. We will use the selection event $\hat E$ and the selected variables/signs pair $(\hat S, \hat s)$ interchangeably since they are in bijection.
%

The space $\reals^n$ is partitioned by the selection events, $$\reals^n = \bigsqcup_{(S,s)} \{y: A(S,s) y \le b(S,s) \}.$$ The vector $y$ can be decomposed with respect to the partition as follows
\begin{align}
y&= \sum_{S,s} y\ \indicator{ A(S,s) y \le b(S,s)}
\end{align}
The previous equation establishes that $y$ is a different constrained Gaussian for each element of the partition, where the partition is specified by a possible subset of variables and signs $(S,s)$. The above discussion can be summarized in the following theorem.
\begin{theorem}
\label{thm:y-cond-dist}
The distribution of $y$ conditional on the selection event is a constrained Gaussian,
\begin{align*}
y|\{\hat E(y) =E\} \overset{d}{=} z \big| \{A(S,s)z \le b\}, \ z\sim \cN(\mu,\sigma^2I).
\end{align*}
\end{theorem}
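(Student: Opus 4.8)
The plan is to observe that this theorem is, at its core, a repackaging of the polyhedral representation already derived in \eqref{eq:A-b-defn}: once the selection event is identified as a concrete subset of $\reals^n$, namely a polyhedron, the conditional law of a Gaussian given selection collapses to the law of a Gaussian constrained to that polyhedron. There is essentially no analysis to do beyond making the event identification and the underlying measure-theoretic bookkeeping precise.

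First I would make the identification of events explicit. Fix a pair $(S,s)$ and let $E$ be the corresponding value of $\hat E(y)$. By the derivation in \eqref{eq:A-b-defn}, the set of responses for which marginal screening selects exactly the variables $S$ with signs $s$ is
\begin{align*}
\{y : \hat E(y) = E\} = \{y : A(S,s) y \le b(S,s)\},
\end{align*}
because the sign conditions ``$s_i x_i^T y \ge 0$'' for $i \in S$ together with the comparison conditions ``$s_i x_i^T y > \pm x_j^T y$'' for $i \in S,\ j \in S^c$ are, after fixing $s$, linear in $y$ and constitute exactly the rows of $A(S,s) y \le b(S,s)$. The point to stress is that this is an equality of subsets of $\reals^n$, not merely of probabilities.

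Next I would verify that the cells $\{A(S,s) y \le b(S,s)\}$, ranging over all feasible $(S,s)$, genuinely partition $\reals^n$ up to a set of Lebesgue measure zero. Distinct cells can meet only on boundaries where some inequality holds with equality, \ie where $|x_i^T y| = |x_j^T y|$ for some $i \ne j$, or where $x_i^T y = 0$. Since $X$ is in general position and $y$ has the continuous distribution $\cN(\mu,\sigma^2 I)$, each such tie event has probability zero; hence $\hat E(y)$ is almost surely well defined and the decomposition of $y$ over the partition holds a.s. Finally I would conclude the distributional claim: letting $z \sim \cN(\mu,\sigma^2 I)$ share the unconditional law of $y$, conditioning $y$ on $\{\hat E(y) = E\}$ is by the first step identical to conditioning on the fixed set $\{A(S,s) y \le b\}$, and the conditional law of a random vector given that it lies in a fixed measurable set depends only on its unconditional law, giving
\begin{align*}
y \mid \{\hat E(y) = E\} \overset{d}{=} z \mid \{A(S,s) z \le b\}.
\end{align*}

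The main obstacle is conceptual rather than computational: the entire content sits in recognizing the set equality of the first step, after which the statement is a near-tautology about conditional distributions. The one genuinely load-bearing detail is the general-position / measure-zero argument, which is what legitimately lets us treat $\{\hat E(y) = E\}$ as a fixed polyhedron carrying positive Gaussian mass, rather than as an ambiguous event defined by ties; without it the selection map $\hat E(y)$ would not be well defined and the conditioning would be ill-posed.
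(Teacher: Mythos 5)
Your proposal is correct and follows essentially the same route as the paper's own (two-line) proof: identify the event $\{\hat E(y)=E\}$ with the polyhedron $\{A(S,s)y\le b(S,s)\}$ via the bijection with $(S,s)$, and note that conditioning an unconditionally Gaussian $y$ on a fixed measurable set yields the constrained Gaussian law. Your additional care about ties having Lebesgue measure zero under general position is a worthwhile elaboration the paper leaves implicit, but it does not change the argument.
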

\begin{proof}
The event $E$ is in bijection with a pair $(S,s)$, and $y$ is unconditionally Gaussian. Thus the conditional $y \big| \{A( S, s)y \le b( S,  s)\}$ is a Gaussian constrained to the set $\{A( S, s)y \le b( S,  s)\}$.
\end{proof}

\section{Truncated Gaussian test}
\label{sec:truncated-gaussian-test}
This section summarizes the recent tools developed in \cite{lee2013exact} for testing contrasts\footnote{A contrast of $y$ is a linear function of the form $\eta^Ty$.} $\eta^Ty$ of a constrained Gaussian $y$. The results are stated without proof and the proofs can be found in \cite{lee2013exact}.

The distribution of a constrained Gaussian $y \sim N(\mu, \Sigma)$ conditional on affine constraints $\{Ay \leq b\}$ has density $ \frac{1}{\Pr(Ay \le b)}f(y;\mu,\Sigma) \indicator{Ay\le b}$, involves the intractable normalizing constant $\Pr(Ay \leq b)$. In this section, we derive a one-dimensional pivotal quantity for $\eta^T\mu$. This pivot relies on characterizing the distribution of $\eta^Ty$ as a truncated normal.
The key step to deriving this pivot is the following lemma:
\begin{lemma}
\label{lem:conditional}
The conditioning set can be rewritten in terms of $\eta^T y$ as follows:
\[  \{Ay \leq b\} = \{\V^-(y) \leq \eta^T y \leq \V^+(y), \V^0(y) \geq 0 \} \]
where 
\begin{align}
\alpha &= \frac{A\Sigma\eta}{\eta^T\Sigma\eta} \label{eq:alpha} \\
\V^- = \V^-(y) &= \max_{j:\ \alpha_j < 0} \frac{b_j - (Ay)_j + \alpha_j\eta^T y}{\alpha_j} \label{eq:v_minus} \\
\V^+ = \V^+(y) &= \min_{j:\ \alpha_j > 0} \frac{b_j - (Ay)_j + \alpha_j\eta^T y}{\alpha_j}. \label{eq:v_plus} \\
\V^0 = \V^0(y) &= \min_{j:\ \alpha_j = 0} b_j - (Ay)_j \label{eq:v_zero}
\end{align}
Moreover, $(\V^+, \V^-, \V^0)$ are independent of $\eta^T y$.
\end{lemma}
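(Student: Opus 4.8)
The plan is to decompose $y$ into its component along $\eta$ (in the $\Sigma$-geometry) and an independent residual, rewrite the polyhedral constraints in terms of the scalar $\eta^T y$, and read off the bounds row by row. The two claims — the set identity and the independence — then both follow from a single cancellation.

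First I would introduce the decomposition
\[ y = c\,\eta^T y + r, \qquad c := \frac{\Sigma \eta}{\eta^T \Sigma \eta}, \qquad r := (I - c\eta^T) y. \]
Because $y$ is Gaussian, the residual $r$ is independent of the scalar $\eta^T y$ precisely when their covariance vanishes; a direct computation gives $\cov(r,\eta^T y) = (I - c\eta^T)\Sigma\eta = \Sigma\eta - c\,(\eta^T\Sigma\eta) = 0$ by the choice of $c$. Hence $r$ is independent of $\eta^T y$. This uses $\eta^T\Sigma\eta > 0$, which holds whenever $\eta \neq 0$ and $\Sigma \succ 0$.

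Next I would substitute this decomposition into the constraints. Writing $Ay = (Ac)\,\eta^T y + Ar$ and noting $Ac = A\Sigma\eta/(\eta^T\Sigma\eta) = \alpha$, the $j$-th row of $\{Ay \le b\}$ becomes $\alpha_j\,\eta^T y + (Ar)_j \le b_j$. I would then solve this scalar inequality for $\eta^T y$ according to the sign of $\alpha_j$: when $\alpha_j > 0$ it yields the upper bound $\eta^T y \le (b_j - (Ar)_j)/\alpha_j$; when $\alpha_j < 0$ the inequality reverses to a lower bound; and when $\alpha_j = 0$ the term $\eta^T y$ drops out, leaving the constraint $(Ar)_j \le b_j$. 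Taking the $\min$ over positive rows, the $\max$ over negative rows, and the $\min$ over zero rows recovers exactly $\V^+$, $\V^-$, and the condition $\V^0 \ge 0$, after using the identity $(Ar)_j = (Ay)_j - \alpha_j\,\eta^T y$ to rewrite each numerator as $b_j - (Ay)_j + \alpha_j\,\eta^T y$ as in the statement. This establishes the claimed rewriting $\{Ay \le b\} = \{\V^- \le \eta^T y \le \V^+,\ \V^0 \ge 0\}$.

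Finally I would verify independence, which is the step where care is needed. The point is that although $\V^\pm$ and $\V^0$ are written as functions of $y$, the $\alpha_j\,\eta^T y$ terms cancel so that each numerator equals $b_j - (Ar)_j$, making $\V^+$, $\V^-$, and $\V^0$ functions of $r$ alone. Since $r$ is independent of $\eta^T y$ by the first step, the triple $(\V^+, \V^-, \V^0)$ is independent of $\eta^T y$, giving the remaining assertion. I expect the only genuine obstacle to be bookkeeping — confirming that the cancellation is exact and that dividing by $\alpha_j$ correctly flips the inequality for negative $\alpha_j$; no deeper difficulty arises.
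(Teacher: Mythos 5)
Your proof is correct and is essentially the argument the paper has in mind: the paper states Lemma \ref{lem:conditional} without proof (deferring to Lee et al.), but its accompanying geometric discussion uses exactly your decomposition of $y$ into the component along $\Sigma\eta/(\eta^T\Sigma\eta)$ and an independent residual, with the sign-of-$\alpha_j$ case analysis and the cancellation $b_j-(Ay)_j+\alpha_j\eta^Ty = b_j-(Ar)_j$ doing the work. No gaps.
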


The geometric picture gives more intuition as to why $\V^+$ and $\V^-$ are independent of $\eta^T y$. Without loss of generality, we assume $||\eta||_2 = 1$ and $y \sim N(\mu, I)$ (otherwise we could replace $y$ by $\Sigma^{-\frac{1}{2}}y$). Now we can decompose $y$ into two independent components, a 1-dimensional component $\eta^T y$ and an $(n-1)$-dimensional component orthogonal to $\eta$:
\[ y = \eta^T y + P_{\eta^\perp} y. \]
The case of $n=2$ is illustrated in Figure \ref{fig:polytope}. Since the two components are independent, the distribution of $\eta^T y $ is the same as  $\eta^T y| \{P_{\eta^\perp}y \}$. If we condition on $P_{\eta^\perp} y$, it is clear from Figure \ref{fig:polytope} that in order for $y$ to lie in the set, it is necessary for $\V^- \leq \eta^T y \leq \V^+$, where $\V^-$ and $\V^+$ are functions of $P_{\eta^\perp} y$.
\begin{figure}[!h]
\centering
\includegraphics[width = .5\textwidth]{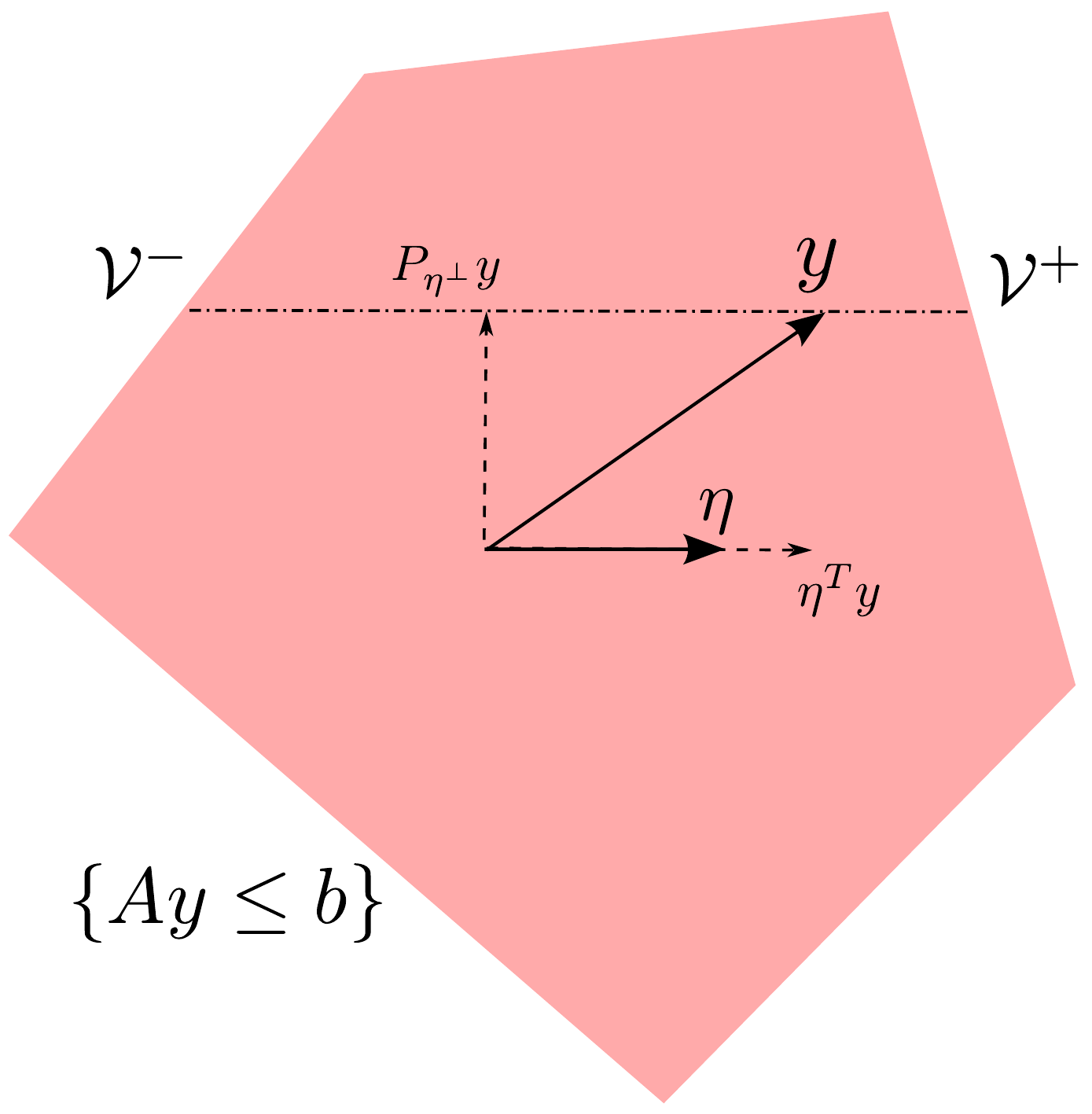}
\caption{A picture demonstrating that the set $\left\{Ay \leq b \right\}$ can be characterized by $\{ \V^- \leq \eta^T y \leq \V^+\}$. Assuming $\Sigma = I$ and $||\eta||_2 = 1$, $\V^-$ and $\V^+$ are functions of $P_{\eta^\perp}y$ only, which is independent of $\eta^T y$.}
\label{fig:polytope}
\end{figure}

\begin{corollary}
The distribution of $\eta^Ty$ conditioned on $\{Ay \leq b, \V^+(y)= v^+, \V^- (y) =v^-\}$ is a (univariate) Gaussian truncated to fall between $\V^-$ and $\V^+$, \ie
\[
\eta^Ty\mid\{Ay \leq b, \V^+(y)= v^+, \V^- (y) =v^-\} \overset{d}{=} W
\]
where $W\sim TN(\eta^T \mu, \eta^T \Sigma\eta ,v^-,v^+)$. $TN(\mu,\sigma,a,b)$ is the normal distribution truncated to lie between $a$ and $b$.

\label{cor:truncated-normal}
\end{corollary}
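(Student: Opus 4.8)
The plan is to combine the two facts already established in this section: Theorem \ref{thm:y-cond-dist}, which tells us that conditional on the selection event $y$ is nothing but a Gaussian $\cN(\mu,\Sigma)$ truncated to the polyhedron $\{Ay \le b\}$, and Lemma \ref{lem:conditional}, which simultaneously rewrites that polyhedron as $\{\V^- \le \eta^T y \le \V^+,\ \V^0 \ge 0\}$ and asserts that the triple $(\V^+,\V^-,\V^0)$ is independent of the scalar $\eta^T y$. Once both are in hand the corollary is essentially immediate, so the real content is to set up the conditioning correctly and then read off the resulting one-dimensional density.

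First I would record the marginal law of the contrast: since $y \sim \cN(\mu,\Sigma)$, the scalar $\eta^T y$ is distributed as $\cN(\eta^T\mu,\eta^T\Sigma\eta)$, with density I will call $f$. Next I would rewrite the conditioning event using Lemma \ref{lem:conditional}: the event $\{Ay \le b,\ \V^+(y)=v^+,\ \V^-(y)=v^-\}$ is identical to $\{v^- \le \eta^T y \le v^+,\ \V^0(y) \ge 0,\ \V^+(y)=v^+,\ \V^-(y)=v^-\}$. The key observation is that once $\V^+$ and $\V^-$ are pinned at $v^+$ and $v^-$, the only restriction the event places directly on $\eta^T y$ is the interval constraint $v^- \le \eta^T y \le v^+$.

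The crux is then to feed in the independence statement. Because $(\V^+,\V^-,\V^0)$ is independent of $\eta^T y$, the joint density factorizes as $f(t)\,g(v^+,v^-,v^0)$ for some density $g$ of the triple. Conditioning on $\V^+=v^+$, $\V^-=v^-$ together with the event $\V^0 \ge 0$ inherited from $\{Ay \le b\}$, the factor $g$ and the integral over $\{v^0 \ge 0\}$ do not involve $t$ and therefore cancel between numerator and denominator; what survives is a density proportional to $f(t)\,\indicator{v^- \le t \le v^+}$. Normalizing over $[v^-,v^+]$ gives exactly the truncated normal $TN(\eta^T\mu,\eta^T\Sigma\eta,v^-,v^+)$, which is the claim.

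The one point needing care, and which I expect to be the main (though minor) obstacle, is that we are conditioning on the measure-zero events $\{\V^+=v^+\}$ and $\{\V^-=v^-\}$, so the argument must be phrased through conditional densities or a disintegration rather than elementary conditioning on a positive-probability set. The independence asserted in Lemma \ref{lem:conditional} is precisely what makes this rigorous and painless: it guarantees the clean factorization above, so that conditioning on the exact values of $\V^{\pm}$ merely installs the truncation limits without otherwise reshaping the Gaussian.
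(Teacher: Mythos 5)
Your proposal is correct and follows exactly the intended argument: the paper itself omits the proof (deferring to the cited Lasso reference), but the route you take --- rewriting $\{Ay\le b\}$ via Lemma \ref{lem:conditional} and then using the independence of $(\V^+,\V^-,\V^0)$ from $\eta^Ty$ so that conditioning on $\{\V^+=v^+,\V^-=v^-,\V^0\ge 0\}$ merely installs the truncation limits on the marginal Gaussian $\cN(\eta^T\mu,\eta^T\Sigma\eta)$ --- is precisely the argument the stated lemma is set up to deliver. Your remark that the measure-zero conditioning should be handled by a density factorization/disintegration is the right way to make this rigorous and matches the standard treatment.
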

In Figure \ref{fig:truncated}, we plot the density of the truncated Gaussian, noting that its shape depends on the
 location of $\mu$ relative to $[a,b]$ as well as the width relative to $\sigma$.

\begin{figure}[!h]
\centering
\includegraphics[width = .5\textwidth]{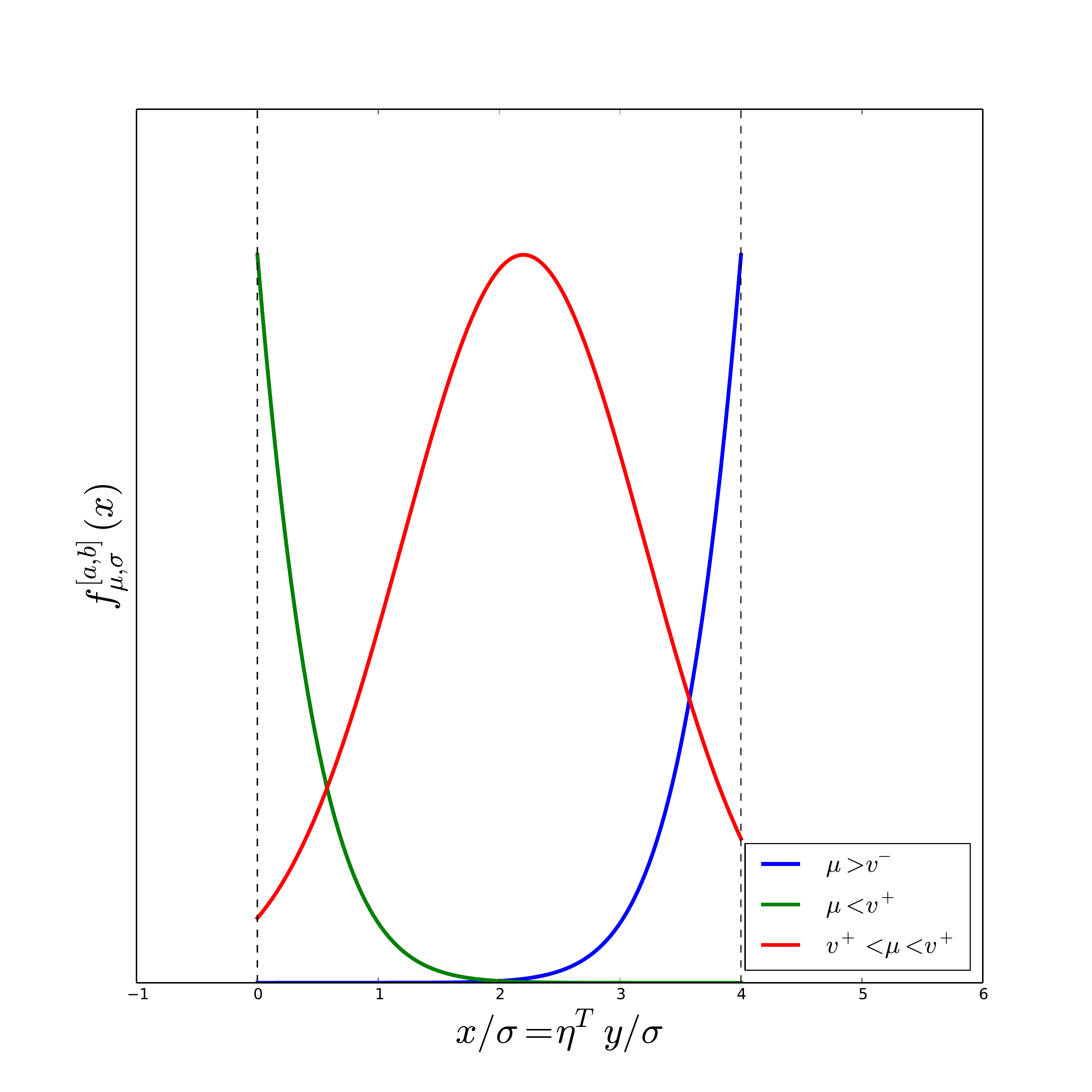}
\caption{The density of the truncated Gaussian $TN(\mu,\sigma^2,a,b)$ depends on the width of $[a,b]$ relative to $\sigma$ as well as the location of $\mu$ relative to $[a,b]$. When
$\mu$ is firmly inside the interval, the distribution resembles a Gaussian. As $\mu$ leaves $[a,b]$, the density begins to converge to an exponential distribution with mean inversely proportional to the distance between $\mu$ and its projection onto $[a,b]$.}
\label{fig:truncated}
\end{figure}

The following pivotal quantity\footnote{The distribution of a pivotal quantity does not depend on unobserved parameters.} follows from Corollary \ref{cor:truncated-normal} via the probability integral transform.

\begin{theorem}
\label{thm:truncated-gaussian-pivot}
Let $\Phi(x)$ denote the CDF of a $N(0,1)$ random variable, and let $F_{\mu, \sigma^2}^{[a, b]}$ denote the CDF of $TN(\mu,\sigma, a,b)$, i.e.:
\begin{equation}
F_{\mu, \sigma^2}^{[a, b]}(x) = \frac{\Phi((x-\mu)/\sigma) - \Phi((a-\mu)/\sigma)}{\Phi((b-\mu)/\sigma) - \Phi((a-\mu)/\sigma)}.
\label{eq:U}
\end{equation}
Then $F_{\eta^T\mu,\ \eta^T \Sigma \eta}^{[\V^-, \V^+]}(\eta^T y)$ is a pivotal quantity, conditional on $\{Ay \leq b\}$:
\begin{equation}
F_{\eta^T\mu,\ \eta^T \Sigma \eta}^{[\V^-, \V^+]}(\eta^T y)\ \big|\ \{Ay \leq b\} \sim \unif(0,1) 
\end{equation}
where $\V^-$ and $\V^+$ are defined in \eqref{eq:v_minus} and \eqref{eq:v_plus}.
\end{theorem}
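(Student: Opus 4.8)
The plan is to combine Corollary \ref{cor:truncated-normal} with the probability integral transform and then average over the conditioning variables $(\V^-,\V^+)$ using the tower property. The overall strategy is: first freeze $(\V^-,\V^+)$ by conditioning on them, identify the resulting conditional law of $\eta^T y$ as a truncated normal with known endpoints, apply the standard fact that a continuous CDF evaluated at its own random variable is uniform, and finally observe that because this holds for every realization of $(\V^-,\V^+)$, it persists after integrating them out.

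First I would condition on the event $\{Ay \le b,\ \V^-(y)=v^-,\ \V^+(y)=v^+\}$. By Corollary \ref{cor:truncated-normal}, under this conditioning $\eta^T y$ is distributed as $TN(\eta^T\mu,\eta^T\Sigma\eta,v^-,v^+)$, whose CDF is precisely the function $F_{\eta^T\mu,\ \eta^T\Sigma\eta}^{[v^-,v^+]}$ defined in \eqref{eq:U}. Since $v^-<v^+$ on the selection event, this truncated-normal CDF is continuous and strictly increasing on $[v^-,v^+]$, so the probability integral transform applies and yields
\[
F_{\eta^T\mu,\ \eta^T \Sigma \eta}^{[v^-, v^+]}(\eta^T y)\ \big|\ \{Ay \le b,\ \V^- = v^-,\ \V^+ = v^+\} \sim \unif(0,1).
\]
The crucial point here, supplied by Lemma \ref{lem:conditional}, is that $(\V^-,\V^+)$ are independent of $\eta^T y$, so conditioning on their values legitimately freezes the truncation endpoints without otherwise disturbing the law of $\eta^T y$.

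Finally I would remove the extra conditioning. Writing $U = F_{\eta^T\mu,\ \eta^T\Sigma\eta}^{[\V^-,\V^+]}(\eta^T y)$ and applying the tower property,
\[
\Pr(U \le u \mid Ay \le b) = \Ee\!\left[\Pr\!\left(U \le u \mid Ay \le b,\ \V^-,\V^+\right) \;\middle|\; Ay \le b\right] = \Ee[\,u \mid Ay \le b\,] = u,
\]
for every $u \in [0,1]$, where the middle equality is exactly the conditional statement established above. Hence $U \mid \{Ay \le b\} \sim \unif(0,1)$.

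I expect the main obstacle to be this last averaging step: one must check that the conditional uniformity holds for almost every value of $(\V^-,\V^+)$, so that the integral collapses cleanly, and that the \emph{random} truncation endpoints inside $U$ are handled correctly. It is precisely the independence in Lemma \ref{lem:conditional} that makes the inner conditional probability equal to the constant $u$ rather than some function of $(v^-,v^+)$. The probability integral transform and the identification of the CDF are routine once the conditioning is set up, so the conceptual content lies entirely in recognizing that ``uniform for each frozen pair of endpoints'' upgrades to ``uniform unconditionally.''
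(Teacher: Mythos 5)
Your proposal is correct and follows exactly the route the paper indicates: the theorem is obtained from Corollary \ref{cor:truncated-normal} via the probability integral transform, with the independence of $(\V^-,\V^+)$ from $\eta^Ty$ (Lemma \ref{lem:conditional}) guaranteeing that the conditional uniformity does not depend on the realized endpoints, so the tower-property averaging collapses to $\unif(0,1)$. The paper defers the detailed proof to its reference, but your filled-in argument is the intended one and contains no gaps.
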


\begin{figure}[h]
\centering
\includegraphics[width = .45\textwidth]{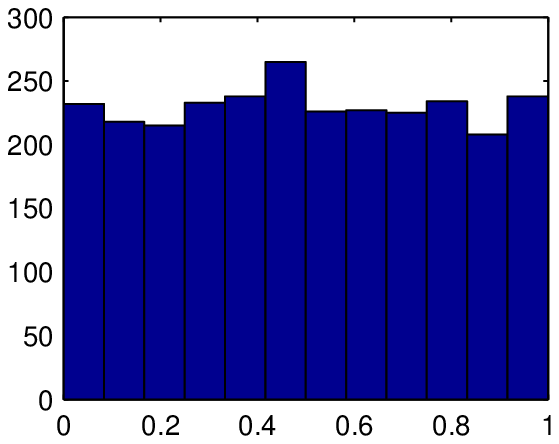}
\includegraphics[width = .45\textwidth]{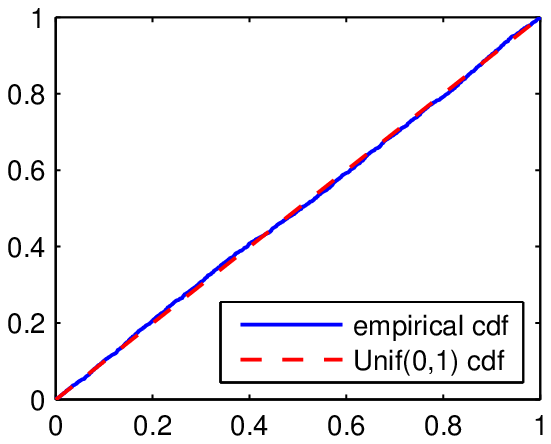}
\caption{Histogram and qq plot of $F_{\eta^T\mu,\ \eta^T \Sigma \eta}^{[\V^-, \V^+]}(\eta^T y)$ where $y$ is a constrained Gaussian. The distribution is very close to $\text{Unif}(0,1)$, which is in agreement with Theorem \ref{thm:truncated-gaussian-pivot}.}
\label{fig:null-qq-plot}
\end{figure}

\section{Inference for marginal screening}
\label{sec:inference-marg-screen}

In this section, we apply the theory summarized in Sections \ref{sec:selection-event} and \ref{sec:truncated-gaussian-test} to marginal screening. In particular, we will construct confidence intervals for the selected variables.

To summarize the developments so far, recall that our model \eqref{eq:model} says that $y \sim N(\mu, \sigma^2 I)$. The distribution of interest is $y| \{\hat E(y) =E\}$, and by Theorem \ref{thm:y-cond-dist}, this is equivalent to $y | {\{A(S, s)z \leq b(S, s) \}}$, where $y \sim N(\mu, \sigma^2 I)$. By applying Theorem \ref{thm:truncated-gaussian-pivot}, we obtain the pivotal quantity
\begin{align}
F_{\eta^T\mu,\ \sigma^2 ||\eta||_2^2}^{[\V^-, \V^+]}(\eta^Ty)\ \big|\ \{\hat E(y)=E\} \sim \unif(0,1)
\label{eq:pivot-Lasso}
\end{align}
for any $\eta$, where $\V^- $ and $\V^+ $ are defined in \eqref{eq:v_minus} and \eqref{eq:v_plus}.  
\subsection{Hypothesis tests for selected variables}

In this section, we describe how to form confidence intervals for the components of $\beta^\star _{\hat S}= (X_{\hat S}^T X_{\hat S})^{-1} X_{\hat S}^T  \mu$. The best linear predictor of $\mu$ that uses only the selected variables is $\beta^\star_{\hat S}$ , and $\hbeta _{\hat S}= (X_{\hat S}^T X_{\hat S})^{-1} X_{\hat S}^T  y$ is an unbiased estimate of $\beta^\star _{\hat S}$. In this section, we propose hypothesis tests and confidence intervals for $\beta^\star_{\hat S}$.
 If we choose 
\begin{equation}
\label{eq:eta_confint}
\eta_j = ((X_{\hat S}^T X_{\hat S})^{-1} X_{\hat S}^T e_j)^T,
\end{equation} 
then $\eta_j^T \mu = \beta_{j\in \hat S}^\star$, so the above framework provides a method for inference about the $j^\text{th}$ variable in the model $\hat S$. This choice of $\eta$ is not fixed before marginal screening selects $\hat S$, but it is measurable with respect to the $\sigma$-algebra generated by the partition. Since it is measurable, $\eta$ is constant on each partition, so the pivot is uniformly distributed on each element of the partition, and thus uniformly distributed for all $y$.

If we assume the linear model $\mu =X\beta^0$ for some $\beta^0\in\reals^p$, $S^0:=\text{support}(\beta^0) \subset \hat S$, and $X_{\hat S}$ is full rank, then by the following computation $\beta^\star_{\hat S} = \beta_{\hat S}^0$:
\begin{align*}
\beta^\star _{\hat S}&=  (X_{\hat S} ^T X_{\hat S})^{-1}X_{\hat S} ^T X_S \beta_S ^0\\
&=  (X_{\hat S} ^T X_{\hat S})^{-1}X_{\hat S} ^T X_{\hat S} \beta_{\hat S} ^0\\
&=\beta_{\hat S} ^0
\end{align*}
In \cite{fan2008sure}, the screening property $S^0 \subset \hat S$ for the marginal screening algorithm is established under mild conditions. Thus under the screening property, our method provides hypothesis tests and confidence intervals for $\beta_{\hat S}^0$.

By applying Theorem \ref{thm:truncated-gaussian-pivot}, we obtain the following (conditional) pivot for $\beta^\star_{j \in \hat S}$:
\begin{align}
F_{\beta^\star_{j \in\hat S},\ \sigma^2 ||\eta_j||^2}^{[\V^-, \V^+]}(\eta_j^Ty)\ \Big| \{\hat E(y)= E\} \sim \unif(0,1).
\label{eq:beta-pivot}
\end{align}
The quantities $j$ and $\eta_j$ are both random through $\hat\E$, a quantity which is fixed after conditioning, therefore Theorem \ref{thm:truncated-gaussian-pivot} holds even for this choice of $\eta$.

Consider testing the hypothesis $H_0: \beta^\star_{j \in \hat S} =\beta_j$. A valid test statistic is given by $F_{\beta_j,\ \sigma^2 ||\eta_j||^2}^{[\V^-, \V^+]}(\eta_j^Ty)$, which is uniformly distributed under the null hypothesis and $y|\{\hat E(y)= E\}$. Thus, this test would reject when $F_{\beta_j,\ \sigma^2 ||\eta_j||^2}^{[\V^-, \V^+]}(\eta_j^Ty) >1-\frac{\alpha}{2}$ or $F_{\beta_j,\ \sigma^2 ||\eta_j||^2}^{[\V^-, \V^+]}(\eta_j^Ty) < \frac{\alpha}{2}$.
\begin{theorem}
The test of $H_0: \beta^\star_{j \in \hat S} =\beta_j$ that accepts when $$\frac{\alpha}{2}<F_{\beta_j,\ \sigma^2 ||\eta_j||^2}^{[\V^-, \V^+]}(\eta_j^Ty)<1-\frac{\alpha}{2}$$ is an $\alpha$ level test of $H_0$.
\end{theorem}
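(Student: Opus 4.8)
The plan is to reduce the claim to the uniformity of the pivot established in \eqref{eq:beta-pivot} together with the law of total probability. First I would fix an arbitrary selection event $E$, equivalently a pair $(S,s)$, and work conditionally on $\{\hat E(y)=E\}$. Under the null hypothesis $H_0:\beta^\star_{j\in\hat S}=\beta_j$, the parameter appearing in the truncated-Gaussian CDF coincides with the true mean $\eta_j^T\mu=\beta^\star_{j\in\hat S}=\beta_j$, so substituting into \eqref{eq:beta-pivot} yields
\begin{align*}
F_{\beta_j,\ \sigma^2\|\eta_j\|^2}^{[\V^-,\V^+]}(\eta_j^Ty)\ \big|\ \{\hat E(y)=E\}\sim\unif(0,1).
\end{align*}
Writing $U:=F_{\beta_j,\ \sigma^2\|\eta_j\|^2}^{[\V^-,\V^+]}(\eta_j^Ty)$ for this pivot, the test rejects precisely when $U\le\alpha/2$ or $U\ge1-\alpha/2$.

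Next I would compute the conditional Type I error. Because $U$ is $\unif(0,1)$ conditional on $\{\hat E(y)=E\}$ under $H_0$, the two rejection regions have conditional probabilities $\alpha/2$ each, so
\begin{align*}
\Pr\left(\text{reject }H_0\ \big|\ \hat E(y)=E,\,H_0\right)=\Pr\left(U\le\tfrac{\alpha}{2}\right)+\Pr\left(U\ge1-\tfrac{\alpha}{2}\right)=\alpha.
\end{align*}
This holds for every element of the partition. I would then marginalize over the selection event via the law of total probability, using that the events $\{\hat E(y)=E\}$ partition $\reals^n$:
\begin{align*}
\Pr\left(\text{reject }H_0\ \big|\ H_0\right)=\sum_{E}\Pr\left(\text{reject }H_0\ \big|\ \hat E(y)=E,\,H_0\right)\Pr\left(\hat E(y)=E\right)=\alpha.
\end{align*}

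The one point requiring care---and the main obstacle---is that both the index $j$ and the contrast $\eta_j$ defined in \eqref{eq:eta_confint} depend on the data through $\hat S$, so $U$ is not a pivot for a single fixed linear functional and one cannot invoke the unconditional distribution of $\eta_j^Ty$ directly. I would address this exactly as in the discussion preceding \eqref{eq:beta-pivot}: after conditioning on $\{\hat E(y)=E\}$, the selected set $\hat S$, the sign pattern $s$, and hence the vector $\eta_j$ are all fixed, being measurable with respect to the $\sigma$-algebra generated by the partition. On each partition element $\eta_j$ is therefore a constant vector, so Theorem \ref{thm:truncated-gaussian-pivot} applies verbatim and delivers conditional uniformity. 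Once the conditional level equals $\alpha$ for every $E$, the averaging step transfers this to the unconditional level, and no eigenvalue-type assumptions on $X$ or constraints relating $n$ and $p$ enter anywhere.
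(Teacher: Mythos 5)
Your proposal is correct and follows essentially the same route as the paper's own proof: conditional uniformity of the pivot under $H_0$ from \eqref{eq:beta-pivot} gives exact conditional level $\alpha$ on each partition element, and the law of total probability over the selection events transfers this to the unconditional level. Your added care about the measurability of $j$ and $\eta_j$ with respect to the partition mirrors the remark the paper makes immediately after \eqref{eq:beta-pivot}, so nothing is missing.
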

\begin{proof}
Under $H_0$, we have $\beta^\star_{j \in \hat S} =\beta_j$, so by \eqref{eq:beta-pivot} $F_{\beta_j,\ \sigma^2 ||\eta_j||^2}^{[\V^-, \V^+]}(\eta_j^Ty)\big| \{\hat E(y) =E\} $ is uniformly distributed. Thus 
\begin{small}
$$
\Prob( \frac{\alpha}{2}<F_{\beta_j,\ \sigma^2 ||\eta_j||^2}^{[\V^-, \V^+]}(\eta_j^Ty)\le 1-\frac{\alpha}{2} \big| \{\hat E(y) = E,H_0 )\} = 1-\alpha,
$$
\end{small}
and the type 1 error is exactly $\alpha$. Under $H_0$, but not conditional on selection event $\hat E$, we have
\begin{small}
\begin{align*}
&\Prob( \frac{\alpha}{2}<F_{\beta_j,\ \sigma^2 ||\eta_j||^2}^{[\V^-, \V^+]}(\eta_j^Ty)\le 1-\frac{\alpha}{2} \big| H_0 )\}\\
  &=\sum_{\E} \Prob( \frac{\alpha}{2}<F_{\beta_j,\ \sigma^2 ||\eta_j||^2}^{[\V^-, \V^+]}(\eta_j^Ty)\le 1-\frac{\alpha}{2} \big| \{\hat E(y) = E,H_0 )\}\Prob( \hat E(y) =E|H_0)  \\
  &=\sum_E (1-\alpha)   \Prob( \hat E(y) =E|H_0)\\
  &= (1-\alpha)\sum_E \Prob( \hat E(y) =E|H_0)\\
  &=1-\alpha.
\end{align*}
\end{small}
For each element of the partition $\E$, the conditional (on selection) hypothesis test is level $1-\alpha$, so by summing over the partition the unconditional test is level $1-\alpha$.
\end{proof}
Our hypothesis test is not conservative, in the sense that the type 1 error is exactly $\alpha$; also, it is non-asymptotic, since the statement holds for fixed $n$ and $p$. We summarize the hypothesis test in this section in the following algorithm.
\begin{algorithm}
	\caption{Hypothesis test for selected variables}
	\label{alg:test}
\begin{algorithmic}[1]
\State \textbf{Input:} Design matrix $X$, response $y$, model size $k$.
\State Use Algorithm \ref{alg:marg-screen} to select a subset of variables $\hat S$ and signs $\hat s =\text{sign}(X_{\hat S} ^T y)$.
\State Specify the null hypothesis $H_0: \beta^\star_{j \in \hat S}= \beta_j $. 
\State Let $A= A(\hat S,\hat s)$ and $b= b(\hat S,\hat s)$ using \eqref{eq:A-b-defn}. Let $\eta_j = (X_{\hat S}^T)^\dagger e_j$.
\State Compute $F_{\beta_j,\ \sigma^2 ||\eta_j||^2}^{[\V^-, \V^+]}(\eta_j^Ty)$, where $\V^-$ and $\V^+$ are computed via \eqref{eq:v_minus} and \eqref{eq:v_plus} using the $A$, $b$, and $\eta$ previously defined.
\State \textbf{Output:} Reject if  $F_{\beta_j,\ \sigma^2 ||\eta_j||^2}^{[\V^-, \V^+]}(\eta_j^Ty) > \frac{\alpha}{2}$ or  $F_{\beta_j,\ \sigma^2 ||\eta_j||^2}^{[\V^-, \V^+]}(\eta_j^Ty) <1-\frac{\alpha}{2}$.
\end{algorithmic}
\end{algorithm}
\subsection{Confidence intervals for selected variables}
Next, we discuss how to obtain confidence intervals for $ \beta^\star_{j \in \hat S}$. The standard way to obtain an interval is to invert a pivotal quantity \cite{casella1990statistical}. In other words, since
\[ \Prob\left(\frac{\alpha}{2} \leq F_{\beta^\star_{j \in \hat S},\ \sigma^2 ||\eta_j||^2}^{[\V^-, \V^+]}(\eta_j^Ty) \leq 1-\frac{\alpha}{2}\ \big|\ \{\hat E =E\} \right) = \alpha, \]
one can define a $(1-\alpha)$ (conditional) confidence interval for $\beta_{j, \hat\E}^\star$ as 
\begin{align}
\left\{x: \frac{\alpha}{2} \leq F_{x,\ \sigma^2 ||\eta_j||^2}^{[\V^-, \V^+]}(\eta_j^Ty) \leq 1-\frac{\alpha}{2} \right\}. 
\label{eq:conf-int-beta}
\end{align}

In fact, $F$ is monotone decreasing in $x$, so to find its endpoints, one need only solve for the root of a smooth one-dimensional function. The monotonicity is a consequence of the fact that the truncated Gaussian distribution is a natural exponential family and hence has monotone likelihood ratio in $\mu$ \cite{TSH}. 

We now formalize the above observations in the following result, an immediate consequence of Theorem \ref{thm:truncated-gaussian-pivot}.

\begin{corollary}
Let $\eta_j$ be defined as in \eqref{eq:eta_confint}, and let $L_\alpha = L_\alpha(\eta_j,(\hat S,\hat s))$ and $U_\alpha = U_\alpha(\eta_j,(\hat S,\hat s))$ be the (unique) values satisfying 
\begin{align}
F_{L_\alpha,\ \sigma^2 ||\eta_j||^2}^{[\V^-, \V^+]}(\eta_j^Ty) &= 1-\frac{\alpha}{2} & F_{U_\alpha,\ \sigma^2 ||\eta_j||^2}^{[\V^-, \V^+]}(\eta_j^Ty) &= \frac{\alpha}{2} 
\label{eq:L-U-defn}
\end{align}
Then $[L_\alpha, U_\alpha]$ is a $(1-\alpha)$ confidence interval for $\beta^\star _{j \in \hat S}$, conditional on $\hat\E$:
\begin{equation}
\label{eq:coverage}
\Pp \left(\beta^\star _{j \in \hat S} \in [L_{\alpha}, U_{\alpha}]\ \big|\ \{ \hat\E = \E \}  \right) = 1-\alpha.
\end{equation}
\end{corollary}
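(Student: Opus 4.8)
The plan is to derive the coverage statement \eqref{eq:coverage} by the classical device of inverting the one-dimensional pivot established in \eqref{eq:beta-pivot}. Writing $U := F_{\beta^\star_{j \in \hat S},\ \sigma^2 \|\eta_j\|^2}^{[\V^-, \V^+]}(\eta_j^T y)$, Theorem \ref{thm:truncated-gaussian-pivot} specialized to $\eta = \eta_j$ (which is exactly \eqref{eq:beta-pivot}) gives $U \mid \{\hat\E = \E\} \sim \unif(0,1)$, so that
\[
\Pp\!\left(\tfrac{\alpha}{2} \le U \le 1 - \tfrac{\alpha}{2} \,\big|\, \{\hat\E = \E\}\right) = 1-\alpha.
\]
The whole argument then reduces to showing that, with $\eta_j^T y$, $\V^-$, and $\V^+$ all held fixed by the conditioning, the event $\{\alpha/2 \le U \le 1-\alpha/2\}$ coincides with the event $\{\beta^\star_{j \in \hat S} \in [L_\alpha, U_\alpha]\}$.

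First I would record the behavior of the map $x \mapsto F_{x,\ \sigma^2\|\eta_j\|^2}^{[\V^-,\V^+]}(\eta_j^T y)$ in the mean parameter $x$. As noted in the paragraph preceding the statement, the truncated Gaussian forms a natural exponential family with monotone likelihood ratio in its mean, which forces $F_{x,\ \sigma^2\|\eta_j\|^2}^{[\V^-,\V^+]}(\eta_j^T y)$ to be continuous and strictly decreasing in $x$; moreover it sweeps out all of $(0,1)$ as $x$ ranges over $\reals$, with the CDF value tending to $1$ as $x \to -\infty$ and to $0$ as $x \to +\infty$. These three facts — continuity, strict monotone decrease, and full range — guarantee that the defining equations \eqref{eq:L-U-defn} have unique solutions $L_\alpha$ and $U_\alpha$, and that $L_\alpha \le U_\alpha$, since the larger target value $1-\alpha/2$ is attained at the smaller argument.

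With monotonicity in hand, the set equality is immediate. Because $x \mapsto F_{x,\ \sigma^2\|\eta_j\|^2}^{[\V^-,\V^+]}(\eta_j^T y)$ is strictly decreasing, the inequality $F_{x,\ \sigma^2\|\eta_j\|^2}^{[\V^-,\V^+]}(\eta_j^T y) \le 1 - \alpha/2$ holds exactly when $x \ge L_\alpha$, and $F_{x,\ \sigma^2\|\eta_j\|^2}^{[\V^-,\V^+]}(\eta_j^T y) \ge \alpha/2$ holds exactly when $x \le U_\alpha$. Intersecting, the set $\{x : \alpha/2 \le F_{x,\ \sigma^2\|\eta_j\|^2}^{[\V^-,\V^+]}(\eta_j^T y) \le 1-\alpha/2\}$ equals $[L_\alpha, U_\alpha]$, which is precisely the interval \eqref{eq:conf-int-beta}. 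Substituting the true value $x = \beta^\star_{j \in \hat S}$ turns this pivot into $U$, so $\{\beta^\star_{j\in\hat S} \in [L_\alpha,U_\alpha]\} = \{\alpha/2 \le U \le 1-\alpha/2\}$, and the displayed conditional probability equals $1-\alpha$, giving \eqref{eq:coverage}.

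The routine steps — solving the monotone equation for the endpoints and the probability-integral bookkeeping — present no difficulty; I expect the only delicate point to be justifying the monotonicity and the full-range claim rigorously, since these underwrite both the existence and uniqueness of the endpoints and the \emph{direction} of the inversion (in particular that it is $L_\alpha$ which solves the $1-\alpha/2$ equation and $U_\alpha$ the $\alpha/2$ equation). This rests on the exponential-family and monotone-likelihood-ratio structure of the truncated normal cited before the statement, which I would invoke rather than re-derive.
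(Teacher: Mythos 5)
Your proposal is correct and follows essentially the same route as the paper: invert the pivot from \eqref{eq:beta-pivot} and use the monotonicity of $x \mapsto F_{x,\ \sigma^2\|\eta_j\|^2}^{[\V^-,\V^+]}(\eta_j^Ty)$ to identify $[L_\alpha,U_\alpha]$ with the acceptance region \eqref{eq:conf-int-beta}. The paper's own proof is terser, but your added care about continuity, full range, and the direction of the inversion is consistent with the exponential-family/MLR justification the paper cites.
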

\begin{proof}
The confidence region of $\beta^\star _{j \in \hat S}$ is the set of $\beta_j$ such that the test of $H_0: \beta^\star _{j \in \hat S}$ accepts at the $1-\alpha$ level. The function $F_{x,\ \sigma^2 ||\eta_j||^2}^{[\V^-, \V^+]}(\eta_j^Ty)$ is monotone in $x$, so solving for $L_\alpha$ and $U_\alpha$ identify the most extreme values where $H_0$ is still accepted. This gives a $1-\alpha$ confidence interval.
\end{proof}
In relation to the literature on False Coverage Rate (FCR) \cite{benjamini2005false}, our procedure also controls the FCR.
\begin{lemma}
For each $j\in \hat S$,
\begin{equation}
\Prob \left( \beta^\star _{j \in \hat S} \in [L_{\alpha}^j,U_{\alpha}^j]  \right) = 1-\alpha.
\end{equation}
Furthermore, the FCR of the intervals $\left\{[L_{\alpha}^j, U_{\alpha}^j]\right\}_{j\in \hat \E}$  is $\alpha$.
\end{lemma}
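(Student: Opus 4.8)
The plan is to derive both claims from the conditional coverage identity \eqref{eq:coverage} of the preceding corollary by averaging over the partition of $\reals^n$ into selection events, exactly as in the proof of the preceding hypothesis-test theorem. The structural fact I would exploit throughout is that marginal screening always returns exactly $k$ variables, so the number of constructed intervals is the deterministic constant $k$; this is precisely what makes the FCR computation collapse to a one-line linearity-of-expectation argument with a fixed denominator.

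For the first claim I would fix an index $j$ and sum the conditional coverage over all selection events $E$ (equivalently pairs $(S,s)$) for which $j\in S$. On each such event the corollary gives conditional coverage exactly $1-\alpha$, and the quantities $\eta_j, L_\alpha^j, U_\alpha^j$ are constant on the event because they are measurable with respect to the partition $\sigma$-algebra. Writing
\[
\Prob\left(\beta^\star_{j\in\hat S}\in[L_\alpha^j,U_\alpha^j]\right)=\sum_{E:\,j\in S}\Prob\left(\beta^\star_{j\in\hat S}\in[L_\alpha^j,U_\alpha^j]\ \big|\ \hat E=E\right)\Prob(\hat E=E)
\]
and pulling the constant factor $1-\alpha$ out of the sum leaves $1-\alpha$ times $\sum_{E:\,j\in S}\Prob(\hat E=E)=\Prob(j\in\hat S)$; since $\beta^\star_{j\in\hat S}$ and its interval are only defined on $\{j\in\hat S\}$, the probability is understood on that event, and the ratio is exactly $1-\alpha$.

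For the FCR claim I would let $V=\sum_{j\in\hat S}\indicator{\beta^\star_{j\in\hat S}\notin[L_\alpha^j,U_\alpha^j]}$ count the non-covering constructed intervals and $R=|\hat S|=k$ count the constructed ones. Because $R\equiv k$ is deterministic, $\mathrm{FCR}=\Expect[V/R]=\Expect[V]/k$, with no delicate joint behavior of $V$ and $R$ to worry about. I would compute $\Expect[V]$ by conditioning on $\hat E=E$: on each event $S$ is fixed and, by \eqref{eq:coverage}, each of its $k$ intervals fails to cover with conditional probability $1-(1-\alpha)=\alpha$, so by linearity $\Expect[V\mid\hat E=E]=k\alpha$ for every $E$. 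Averaging over the partition gives $\Expect[V]=k\alpha$, hence $\mathrm{FCR}=\alpha$.

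The step I expect to require the most care is not any single calculation but the bookkeeping that legitimizes the conditioning: I must confirm that after conditioning on $\hat E=E$ the random objects $\eta_j$, $L_\alpha^j$, $U_\alpha^j$ become deterministic, so that \eqref{eq:coverage} applies verbatim to each summand, and that $V/R$ is everywhere well defined. The payoff of the deterministic $R=k$ is exactly what sidesteps the usual difficulty in FCR control, where the number of selected parameters is random and correlated with the number of false intervals; here the problem reduces cleanly to linearity of expectation over a fixed number of summands.
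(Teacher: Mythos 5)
Your argument for the first claim is essentially the paper's own proof: decompose over the partition of $\reals^n$ into selection events, apply the conditional coverage identity \eqref{eq:coverage} on each cell, and pull out the constant $1-\alpha$. You are in fact slightly more careful than the paper, which sums over \emph{all} events $E$ and writes $\sum_E (1-\alpha)\Prob(\hat E = E) = 1-\alpha$ without restricting to events with $j\in S$; your version, which sums only over $\{E : j\in S\}$ and interprets the statement as conditional on $\{j\in\hat S\}$ (the only event on which $\beta^\star_{j\in\hat S}$ and its interval are defined), is the correct reading of the lemma and repairs a small imprecision in the printed proof. The more substantive difference is the FCR claim: the paper asserts it in the lemma statement but its proof contains no argument for it at all, whereas you supply one. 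Your argument is correct and is the natural one --- since marginal screening returns exactly $k$ variables, $R\equiv k$ is deterministic, so $\mathrm{FCR}=\Expect[V]/k$, and conditioning on $\hat E=E$ plus linearity of expectation (no independence among the $k$ intervals is needed) gives $\Expect[V\mid \hat E=E]=k\alpha$, hence $\mathrm{FCR}=\alpha$. This is exactly the step that sidesteps the usual difficulty in FCR control with a random number of selected parameters, and it is the piece the paper leaves implicit; including it makes the proof complete where the paper's is not.
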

\begin{proof}
By \eqref{eq:coverage}, the conditional coverage of the confidence intervals are $1-\alpha$. The coverage holds for every element of the partition $\{ \hat E(y) = \E\}$, so
\begin{align*}
&\Prob \left( \beta^\star _{j \in \hat S} \in [L_{\alpha}^j,U_{\alpha}^j]  \right)\\
&= \sum_{\E} \Prob \left(\beta^\star _{j \in \hat S} \in [L_{\alpha}, U_{\alpha}]\ \big|\ \{ \hat\E = \E \}  \right) \Prob(\hat E =E)\\
&= \sum_{\E} (1-\alpha) \Prob (\hat E=E)\\
&= (1-\alpha) \sum_{\E} \Prob (\hat E=E)\\
&=1-\alpha.
\end{align*}
\end{proof}
\begin{figure}[!h]
\centering
\includegraphics[width = .5\textwidth]{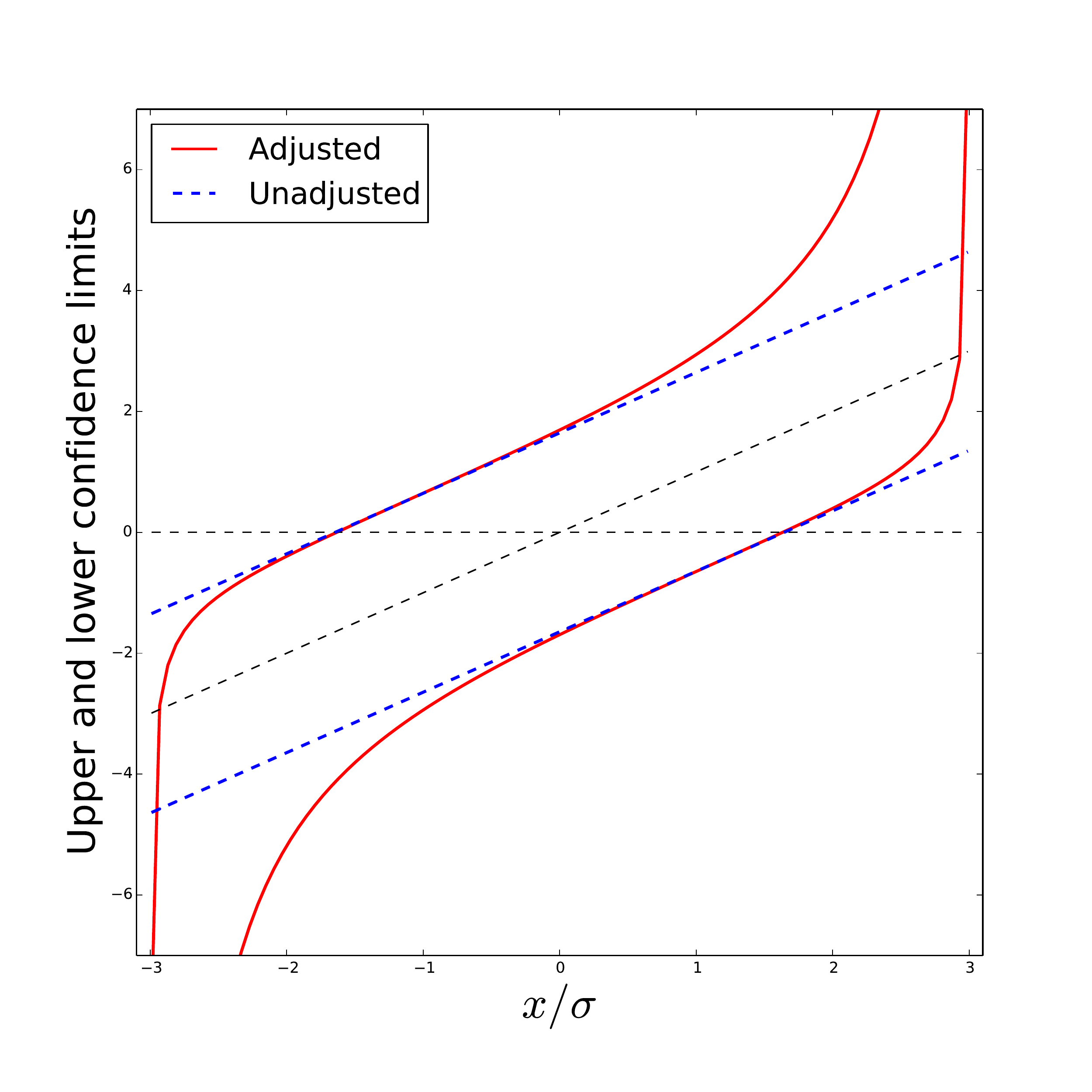}
\caption{Comparison of adjusted and unadjusted 90\% confidence interval for $TN(\mu,\sigma^2,a,b)$. The upper and lower bounds of 90\% confidence intervals are based on $[a,b]=[-3\sigma,3\sigma]$, and the x-axis plots the observation on the scale $\frac{x}{\sigma}$. We see that as long as the obser vation $\frac{x}{\sigma}$ is roughly $0.5\sigma$ away from either boundary, the size of the intervals is comparable to an unadjusted confidence interval. However, the adjusted intervals are guaranteed to have the correct coverage, whereas it is unknown when the unadjusted intervals have the correct coverage.}
\label{fig:intervals}
\end{figure}
We summarize the algorithm for selecting and constructing confidence intervals below.
\begin{algorithm}
	\caption{Confidence intervals for selected variables}
\begin{algorithmic}[1]
\State \textbf{Input:} Design matrix $X$, response $y$, model size $k$.
\State Use Algorithm \ref{alg:marg-screen} to select a subset of variables $\hat S$ and signs $\hat s =\text{sign}(X_{\hat S} ^T y)$.
\State Let $A= A(\hat S,\hat s)$ and $b= b(\hat S,\hat s)$ using \eqref{eq:A-b-defn}. Let $\eta_j = (X_{\hat S}^T)^\dagger e_j$.
\State Solve for $L^j_{\alpha}$ and $U^j_{\alpha}$ using Equation \eqref{eq:L-U-defn} where $\V^-$ and $\V^+$ are computed via \eqref{eq:v_minus} and \eqref{eq:v_plus} using the $A$, $b$, and $\eta_j$ previously defined.
\State \textbf{Output:} Return the intervals $[L^j_{\alpha}, U^j_{\alpha}]$ for $j \in \hat S$.
\end{algorithmic}
\label{alg:ci}
\end{algorithm}

\subsection{Experiments on Diabetes dataset}
\begin{figure}[!h]
\centering
\includegraphics[width = .5\textwidth]{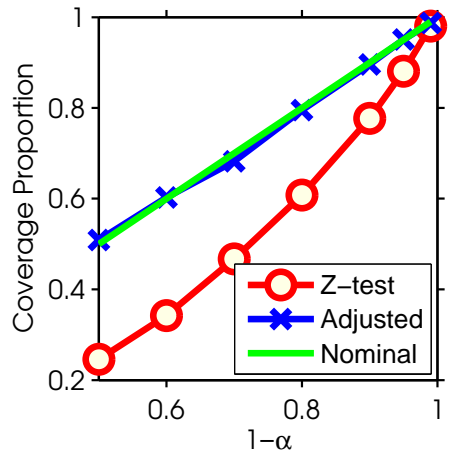}
\caption{Plot of $1-\alpha$ vs the coverage proportion for diabetes dataset. The nominal curve is the line $y=x$. The coverage proportion of the adjusted intervals agree with the nominal coverage level, but the z-test coverage proportion is strictly below the nominal level. The adjusted intervals perform well, despite the noise being non-Gaussian, and $\sigma^2$ unknown.}
\label{fig:diabetes}
\end{figure}
In Figure \ref{fig:fail-z-interval}, we have already seen that the confidence intervals constructed using Algorithm \ref{alg:ci} have exactly $1-\alpha$ coverage proportion. In this section, we perform an experiment on real data where the linear model does not hold, the noise is not Gaussian, and the noise variance is unknown. The diabetes dataset contains $n=442$ diabetes patients measured on $p=10$ baseline variables \cite{efron2004least}. The baseline variables are age, sex, body mass index, average blood pressure, and six blood serum measurements, and the response $y$ is a quantitative measure of disease progression measured one year after the baseline. The goal is to use the baseline variables to predict $y$, the measure of disease progression after one year, and determine which baseline variables are statistically significant for predicting $y$. 

Since the noise variance $\sigma^2$ is unknown, we estimate it by $\sigma^2 = \frac{\norm{y- \hat y}}{n-p}$, where $\hat y = X \hbeta$ and $\hbeta = (X^TX)^{-1} X^T y$. For each trial we generated new responses $\tilde{y}_i = X \hbeta +\tilde \epsilon$, and $\tilde\epsilon$ is bootstrapped from the residuals $r_i = y_i -\hat y_i$. This is known as the residual bootstrap, and is a standard method for assessing statistical procedures when the underlying model is unknown \cite{efron1993introduction}. We used marginal screening to select $k=2$ variables, and then fit linear regression on the selected variables. The adjusted confidence intervals were constructed using Algorithm \ref{alg:ci} with the estimated $\sigma^2$. The nominal coverage level is varied across  $1-\alpha \in \{.5, .6, .7, .8, .9, .95, .99\}$. From Figure \ref{fig:diabetes}, we observe that the adjusted intervals always cover at the nominal level, whereas the z-test is always below. The experiment was repeated $2000$ times.
\section{Extensions}
The purpose of this section is to illustrate the broad applicability of the condition on selection framework. This framework was first proposed in \cite{lee2013exact} to form valid hypothesis tests and confidence intervals after model selection via the Lasso. However, the framework is not restricted to the Lasso, and we have shown how to apply it to marginal screening. For expository purposes, we focused the paper on marginal screening where the framework is particularly easy to understand. In the rest of this section, we show how to apply the framework to marginal screening+Lasso, orthogonal matching pursuit, and non-negative least squares. This is a non-exhaustive list of selection procedures where the condition on selection framework is applicable, but we hope this incomplete list emphasizes the ease of constructing tests and confidence intervals post-model selection via conditioning. 
\label{sec:extensions}
\subsection{Marginal screening + Lasso}
The marginal screening+Lasso procedure was introduced in \cite{fan2008sure} as a variable selection method for the ultra-high dimensional setting of $p=O(e^{n^k})$. Fan et al. \cite{fan2008sure} recommend applying the marginal screening algorithm with $k= n-1$, followed by the Lasso on the selected variables. This is a two-stage procedure, so to properly account for the selection we must encode the selection event of marginal screening followed by Lasso. This can be done by representing the two stage selection as a single event. Let $(\hat S_m, \hat s_m)$ be the variables and signs selected by marginal screening, and the $(\hat S_L, \hat z_L)$ be the variables and signs selected by Lasso \cite{lee2013exact}. In Proposition 2.2 of \cite{lee2013exact}, it is shown how to encode the Lasso selection event $(\hat S_L, \hat z_L)$ as a set of constraints $\{ A_L y \le b_L\}$ \footnote{The Lasso selection event is with respect to the Lasso optimization problem after marginal screening.}, and in Section \ref{sec:selection-event} we showed how to encode the marginal screening selection event $(\hat S_m, \hat s_m)$ as a set of constraints $\{A_m y \le b_m\}$. Thus the selection event of marginal screening+Lasso can be encoded as $\{ A_L y \le b_L, A_m y \le b_m\}$. Using these constraints, the hypothesis test and confidence intervals described in Algorithms \ref{alg:test} and \ref{alg:ci} are valid for marginal screening+Lasso.

\subsection{Orthogonal Matching Pursuit}
Orthogonal matching pursuit (OMP) is a commonly used variable selection method. At each iteration, OMP selects the variable most correlated with the residual $r$, and then recomputes the residual using the residual of least squares using the selected variables. The description of the OMP algorithm is given in Algorithm \ref{alg:omp}.

\begin{algorithm}
	\caption{Orthogonal matching pursuit (OMP)}
\begin{algorithmic}[1]
\State \textbf{Input:} Design matrix $X$, response $y$, and model size $k$.
\State \textbf{for}: $i=1$ to $k$
\State \quad  $p_i = \arg \max_{j=1,\ldots,p} |r_i ^T x_j|$.
\State \quad  $\hat S_i =\cup_{j=1}^i  \ \{p_i\}$.
\State \quad $r_{i+1} = (I- X_{\hat S_i} X_{\hat S_i} ^{\dagger} ) y$.
\State \textbf{end for}
\State \textbf{Output}: $\hat S :=\{p_1, \ldots, p_k\}$, and $\hat \beta_{\hat S} = (X_{\hat S} ^T X_{\hat S} )^{-1} X_{\hat S}^T y$
\end{algorithmic}
\label{alg:omp}
\end{algorithm}

Similar to Section \ref{sec:selection-event}, we can represent the OMP selection event as a set of linear constraints on $y$.
\begin{align*}
\hat E(y) &=\left\{y:\text{sign}(x_{p_i}^T r_i  )x_{p_i}^T r_i > \pm x_{j}^T r_i\text{, for all } j\neq p_i \text{ and all $i \in [k]$}   \right\}\\
&=\{ y:\hat s_i x_{p_i}^T(I-X_{\hat S_{i-1}}X_{\hat S_{i-1}}^\dagger)y  > \pm x_{j}^T(I-X_{\hat S_{i-1}}X_{\hat S_{i-1}}^\dagger)y \text{ and }     \\
& \hat s_i x_{p_i}^T(I-X_{\hat S_{i-1}}X_{\hat S_{i-1}}^\dagger)y>0 \text{, for all }  j\neq p_i \text{, and all $i \in [k]$ }\}\\
&=\left\{y: A(\hat S_1,\ldots, \hat S_k,\hat s_1,\ldots, \hat s_k)  \le b(\hat S_1,\ldots, \hat S_k,\hat s_1,\ldots, \hat s_k)\right\}.
\end{align*}
The selection event encodes that OMP selected a certain variable and the sign of the correlation of that variable with the residual, at steps $1$ to $k$. The primary difference between the OMP selection event and the marginal screening selection event is that the OMP event also describes the order at which the variables were chosen. The marginal screening event only describes that the variable was among the top $k$ most correlated, and not whether a variable was the most correlated or $kth$ most correlated.

Since the selection event can be represented as constraints on $y$, the hypothesis test and confidence intervals described in Algorithms \ref{alg:test} and \ref{alg:ci} are valid for OMP selected $\hat \beta _{\hat S}$.

%

\subsection{Nonnegative Least Squares}
Non-negative least squares (NNLS) is a simple modification of the linear regression estimator with non-negative constraints on $\beta$:
\begin{align}
\arg \min_{\beta: \beta \ge 0} \frac{1}{2} \norm{y-X\beta}^2 .
\label{eq:nnls}
\end{align}
Under a positive eigenvalue conditions on $X$, several authors \cite{slawski2013non,meinshausen2013sign} have shown that NNLS is comprable to the Lasso in terms of prediction and estimation errors. The NNLS estimator also does not have any tuning parameters, since the sign constraint provides a natural form of regularization. NNLS has found applications when modeling non-negative data such as prices, incomes, count data. Non-negativity constraints arise naturally in non-negative matrix factorization, signal deconvolution, spectral analysis, and network tomography; we refer to \cite{chen2009nonnegativity} for a comprehensive survey of the applications of NNLS.

We show how our framework can be used to form exact hypothesis tests and confidence intervals for NNLS estimated coefficients. The primal dual solution pair $(\hat \beta, \hat \lambda)$ is a solution iff the KKT conditions are satisfied, 
\begin{align*}
\hat \lambda_i :=-x_i^T (y-X\hat\beta) &\ge 0 \text{ for all i}\\
\hat \beta &\ge 0.
\end{align*}
Let $\hat S =\{i: -x_i^T (y-X\hat\beta)=0\}$. By complementary slackness $\hat \beta_{-\hat S} =0$, where $-\hat S$ is the complement to the ``active" variables $\hat S$ chosen by NNLS. Given the active set we can solve the KKT equation for the value of $\hat \beta_{\hat S}$,
\begin{align*}
-X_{\hat S}^T (y- X\hat \beta) =0\\
-X_{\hat S} ^T (y- X_{\hat S} \hat \beta{\hat S}) =0\\
\hat \beta_{\hat S} = X_{\hat S} ^\dagger y,
\end{align*}
which is a linear contrast of $y$. The NNLS selection event is 
\begin{align*}
\hat E(y)&=\{y: X_{\hat S} ^T (y-X\hat \beta) =0,\ X_{-\hat S}^T (y-X\hat \beta) >0\}\\
&=\{y: X_{\hat S}^T (y-X\hat \beta) \ge 0, -X_{\hat S}^T (y-X \hat \beta) \ge 0, X_{-\hat S}^T (y-X\hat \beta) >0\}\\
&=\{y: X_{\hat S}^T (I- X_{\hat S} X_{\hat S}^\dagger)y \ge 0, -X_{\hat S}^T(I- X_{\hat S} X_{\hat S}^\dagger)y \ge 0, X_{-\hat S}^T (I- X_{\hat S} X_{\hat S}^\dagger)y >0\}\\
&=\{y: A(\hat S) y \le 0\}.
\end{align*}
The selection event encodes that for a given $y$ the NNLS optimization program will select a subset of variables $\hat S(y)$. Similar to the case in OMP and marginal screening, we can use Algorithms \ref{alg:test} and \ref{alg:ci}, since the selection event is represented by a set of linear constraints $\{y: A(\hat S) y \le 0\}$.

\section{Conclusion}
Due to the increasing size of datasets, marginal screening has become an important method for fast variable selection. However, the standard hypothesis tests and confidence intervals used in linear regression are invalid after using marginal screening to select important variables. We have described a method to perform hypothesis and form confidence intervals after marginal screening. The conditional on selection framework is not restricted to marginal screening, and also applies to OMP, marginal screening + Lasso, and NNLS. 
\section*{Acknowledgements}
Jonathan Taylor was supported in part by NSF grant DMS 1208857 and AFOSR grant 113039. Jason Lee was supported by a NSF graduate fellowship, and a Stanford Graduate Fellowship.
\newpage
\bibliographystyle{plain}
\bibliography{marginal_screening}

\begin{thebibliography}{10}

\bibitem{benjamini2005false}
Yoav Benjamini and Daniel Yekutieli.
\newblock False discovery rate--adjusted multiple confidence intervals for
  selected parameters.
\newblock {\em Journal of the American Statistical Association},
  100(469):71--81, 2005.

\bibitem{berk2013posi}
Richard Berk, Lawrence Brown, Andreas Buja, Kai Zhang, and Linda Zhao.
\newblock Valid post-selection inference.
\newblock {\em Annals of Statistics}, 41(2):802--837, 2013.

\bibitem{buhlmann2011statistics}
Peter~Lukas B{\"u}hlmann and Sara~A van~de Geer.
\newblock {\em Statistics for High-dimensional Data}.
\newblock Springer, 2011.

\bibitem{casella1990statistical}
George Casella and Roger~L Berger.
\newblock {\em Statistical inference}, volume~70.
\newblock Duxbury Press Belmont, CA, 1990.

\bibitem{chen2009nonnegativity}
Donghui Chen and Robert~J Plemmons.
\newblock Nonnegativity constraints in numerical analysis.
\newblock In {\em Symposium on the Birth of Numerical Analysis}, pages
  109--140, 2009.

\bibitem{efron2004least}
Bradley Efron, Trevor Hastie, Iain Johnstone, and Robert Tibshirani.
\newblock Least angle regression.
\newblock {\em The Annals of statistics}, 32(2):407--499, 2004.

\bibitem{efron1993introduction}
Bradley Efron and Robert Tibshirani.
\newblock {\em An introduction to the bootstrap}, volume~57.
\newblock CRC press, 1993.

\bibitem{fan2012variance}
Jianqing Fan, Shaojun Guo, and Ning Hao.
\newblock Variance estimation using refitted cross-validation in ultrahigh
  dimensional regression.
\newblock {\em Journal of the Royal Statistical Society. Series B
  (Methodological)}, 74(1):37--65, 2012.

\bibitem{fan2008sure}
Jianqing Fan and Jinchi Lv.
\newblock Sure independence screening for ultrahigh dimensional feature space.
\newblock {\em Journal of the Royal Statistical Society: Series B (Statistical
  Methodology)}, 70(5):849--911, 2008.

\bibitem{fan2009ultrahigh}
Jianqing Fan, Richard Samworth, and Yichao Wu.
\newblock Ultrahigh dimensional feature selection: beyond the linear model.
\newblock {\em The Journal of Machine Learning Research}, 10:2013--2038, 2009.

\bibitem{fan2010sure}
Jianqing Fan and Rui Song.
\newblock Sure independence screening in generalized linear models with
  np-dimensionality.
\newblock {\em The Annals of Statistics}, 38(6):3567--3604, 2010.

\bibitem{genovese2012comparison}
Christopher~R Genovese, Jiashun Jin, Larry Wasserman, and Zhigang Yao.
\newblock A comparison of the lasso and marginal regression.
\newblock {\em The Journal of Machine Learning Research}, 98888:2107--2143,
  2012.

\bibitem{guyon2003introduction}
Isabelle Guyon and Andr{\'e} Elisseeff.
\newblock An introduction to variable and feature selection.
\newblock {\em The Journal of Machine Learning Research}, 3:1157--1182, 2003.

\bibitem{javanmard2013confidence}
Adel Javanmard and Andrea Montanari.
\newblock Confidence intervals and hypothesis testing for high-dimensional
  regression.
\newblock {\em arXiv preprint arXiv:1306.3171}, 2013.

\bibitem{lee2013model}
Jason Lee, Yuekai Sun, and Jonathan~E Taylor.
\newblock On model selection consistency of penalized m-estimators: a geometric
  theory.
\newblock In {\em Advances in Neural Information Processing Systems}, pages
  342--350, 2013.

\bibitem{lee2013exact}
Jason~D Lee, Dennis~L Sun, Yuekai Sun, and Jonathan~E Taylor.
\newblock Exact inference after model selection via the lasso.
\newblock {\em arXiv preprint arXiv:1311.6238}, 2013.

\bibitem{leeb2003finite}
Hannes Leeb and Benedikt~M P{\"o}tscher.
\newblock The finite-sample distribution of post-model-selection estimators and
  uniform versus nonuniform approximations.
\newblock {\em Econometric Theory}, 19(1):100--142, 2003.

\bibitem{leeb2005model}
Hannes Leeb and Benedikt~M P{\"o}tscher.
\newblock Model selection and inference: Facts and fiction.
\newblock {\em Econometric Theory}, 21(1):21--59, 2005.

\bibitem{leeb2006can}
Hannes Leeb and Benedikt~M P{\"o}tscher.
\newblock Can one estimate the conditional distribution of post-model-selection
  estimators?
\newblock {\em The Annals of Statistics}, pages 2554--2591, 2006.

\bibitem{leekasso}
Jeff Leek.
\newblock Prediction: the lasso vs just using the top 10 predictors.
\newblock
  \url{http://simplystatistics.tumblr.com/post/18132467723/prediction-the-lasso-vs-just-using-the-top-10}.

\bibitem{TSH}
Erich~L. Lehmann and Joseph~P. Romano.
\newblock {\em Testing Statistical Hypotheses}.
\newblock Springer, 3 edition, 2005.

\bibitem{meinshausen2013sign}
Nicolai Meinshausen et~al.
\newblock Sign-constrained least squares estimation for high-dimensional
  regression.
\newblock {\em Electronic Journal of Statistics}, 7:1607--1631, 2013.

\bibitem{meinshausen2009p}
Nicolai Meinshausen, Lukas Meier, and Peter B{\"u}hlmann.
\newblock P-values for high-dimensional regression.
\newblock {\em Journal of the American Statistical Association}, 104(488),
  2009.

\bibitem{negahban2012unified}
Sahand~N Negahban, Pradeep Ravikumar, Martin~J Wainwright, and Bin Yu.
\newblock A unified framework for high-dimensional analysis of $ m $-estimators
  with decomposable regularizers.
\newblock {\em Statistical Science}, 27(4):538--557, 2012.

\bibitem{reid2013variance}
Stephen Reid, Robert Tibshirani, and Jerome Friedman.
\newblock A study of error variance estimation in lasso regression.
\newblock {\em Preprint}, 2013.

\bibitem{slawski2013non}
Martin Slawski, Matthias Hein, et~al.
\newblock Non-negative least squares for high-dimensional linear models:
  Consistency and sparse recovery without regularization.
\newblock {\em Electronic Journal of Statistics}, 7:3004--3056, 2013.

\bibitem{sun2012scaled}
Tingni Sun and Cun-Hui Zhang.
\newblock Scaled sparse linear regression.
\newblock {\em Biometrika}, 99(4):879--898, 2012.

\bibitem{tusher2001significance}
Virginia~Goss Tusher, Robert Tibshirani, and Gilbert Chu.
\newblock Significance analysis of microarrays applied to the ionizing
  radiation response.
\newblock {\em Proceedings of the National Academy of Sciences},
  98(9):5116--5121, 2001.

\bibitem{van2013asymptotically}
Sara van~de Geer, Peter B{\"u}hlmann, and Ya'acov Ritov.
\newblock On asymptotically optimal confidence regions and tests for
  high-dimensional models.
\newblock {\em arXiv preprint arXiv:1303.0518}, 2013.

\bibitem{wainwright2009sharp}
M.J. Wainwright.
\newblock Sharp thresholds for high-dimensional and noisy sparsity recovery
  using {$\ell_1$}-constrained quadratic programming (lasso).
\newblock 55(5):2183--2202, 2009.

\bibitem{wasserman2009high}
Larry Wasserman and Kathryn Roeder.
\newblock High dimensional variable selection.
\newblock {\em Annals of statistics}, 37(5A):2178, 2009.

\bibitem{zhang2011confidence}
Cun-Hui Zhang and S~Zhang.
\newblock Confidence intervals for low-dimensional parameters with
  high-dimensional data.
\newblock {\em arXiv preprint arXiv:1110.2563}, 2011.

\bibitem{zhao2006model}
P.~Zhao and B.~Yu.
\newblock On model selection consistency of lasso.
\newblock 7:2541--2563, 2006.

\end{thebibliography}

\end{document}